\spnewtheorem*{corollary*}{Corollary}{\bf}{\itshape}
\title {Online Square Detection}
\author
{
	Dmitry Kosolobov
}
\institute{Ural Federal University}
\begin{document}

\maketitle

\begin{abstract}
The online square detection problem is to detect the first occurrence of a square in a string whose characters are provided as input one at a time. Recall that a square is a string that is a concatenation of two identical strings. In this paper we present an algorithm solving this problem in $O(n\log\sigma)$ time and linear space on ordered alphabet, where $\sigma$ is the number of different letters in the input string. Our solution is relatively simple and does not require much memory unlike the previously known online algorithm with the same working time. Also we present an algorithm working in $O(n\log n)$ time and linear space on unordered alphabet, though this solution does not outperform the previously known result with the same time bound.
\end{abstract}

\section{Introduction}

The study of algorithms for analysis of different kinds of periodicities in strings constitutes an important branch in stringology and squares often play a central role in such researches. Recall that a string $s$ is a \emph{square} if $s = xx$ for some nonempty string $x$. A string is \emph{squarefree} if it does not have a substring that is a square. We consider algorithms recognizing squarefree strings. To be more precise, let $f$ be a positive function of integer domain; we say that an algorithm \emph{detects} squares in $O(f(n))$ time if for any integer $n$ and any string of length $n$, the algorithm decides whether the string is squarefree in $O(f(n))$ operations. We say that an algorithm detects squares \emph{online} if the algorithm processes the input string sequentially from left to right and decides whether each prefix is squarefree after reading the rightmost letter of that prefix.

In this paper we give two algorithms for online square detection. The first one works on ordered alphabet and requires $O(n\log\sigma)$ time and linear space, where $\sigma$ is the number of different letters in the input string. Though the proposed result is not new (see~\cite{HongChen}), it is rather less complicated than the previously known solution and can be used in practice. The second algorithm works on unordered alphabet and takes $O(n\log n)$ time and linear space. This algorithm is substantially different from the algorithm of~\cite{ApostolicoBreslauer} having with the same time and space bound. These two algorithms are equally applicable for practical use. Let us point out some previous results on the problem of square detection.

One can easily show that on a two-letter alphabet any string of length at least four is not squarefree. A classical result of Thue \cite{Thue} states that on a three-letter alphabet there are infinitely many squarefree strings. Main and Lorentz \cite{MainLorentz} presented an algorithm that detects squares in $O(n\log n)$ time and linear space for unordered alphabet. They proved an $\Omega(n\log n)$ lower bound for the problem of square detection on unordered alphabets, so their result is the best possible in this case. For ordered alphabets, Crochemore \cite{Crochemore} described an algorithm that detects squares in $O(n\log \sigma)$ time and linear space.

The interest in algorithms for online square detection was initially motivated by problems in the artificial intelligence research (see~\cite{LeungPengTing}). Leung, Peng, and Ting \cite{LeungPengTing} obtained an online algorithm that detects squares in $O(n\log^2 n)$ time and linear space on unordered alphabet. Jansson and Peng \cite{JanssonPeng} found an online algorithm that detects squares in $O(n(\log n + \sigma))$ time; for ordered alphabets, their algorithm requires $O(n\log n)$ time. Hong and Chen \cite{HongChen} presented an online algorithm that detects squares in $O(n\log\sigma)$ time and linear space on ordered alphabet. Their algorithm heavily relies on the amount of space consumed by string indexing structures and hence is rather impractical; it seems that even the most careful implementations of the algorithm use at least $40n$ bytes in the worst case. Apostolico and Breslauer \cite{ApostolicoBreslauer} as a byproduct of their parallel algorithm for square detection obtained an online algorithm that detects squares in $O(n\log n)$ time and linear space on unordered alphabet (apparently, the authors of~\cite{LeungPengTing} and~\cite{JanssonPeng} did not know about this result).

The present paper is inspired by Shur's work \cite{Shur} on random generation of square-free strings.

The paper is organized as follows. In Section~\ref{SectCatcher} we present some basic definitions and the key data structure called catcher, which helps to detect squares. Section~\ref{SectUnord} contains an online solution for the case of unordered alphabet. In Section~\ref{SectOrd} we describe an online algorithm for ordered alphabet.

\section{Catcher}\label{SectCatcher}

A \emph{string of length $n$} over an alphabet $\Sigma$ is a map $\{1,2,\ldots,n\} \mapsto \Sigma$. The length of $w$ is denoted by $|w|$. We write $w[i]$ for the $i$th letter of $w$ and $w[i..j]$ for $w[i]w[i{+}1]\ldots w[j]$. Let $w[i..i{-}1]$ be the empty string for any~$i$. A string $u$ is a \emph{substring} of $w$ if $u=w[i..j]$ for some $i$ and $j$. The pair $(i,j)$ is not necessarily unique; we say that $i$ specifies an \emph{occurrence} of $u$ in $w$. A string can have many occurrences in another string. A substring $w[1..j]$ (resp., $w[i..n]$) is a \emph{prefix} [resp. \emph{suffix}] of $w$. A string which is both a proper prefix and a suffix of $w$ is a \emph{boundary} of $w$. A \emph{square suffix} is a suffix that is a square. For any integers $i,j$, the set $\{k\in \mathbb{Z} \colon i \le k \le j\}$ (possibly empty) is denoted by $\overline{i, j}$.

Suppose $text$ is a string, $n = |text|$. To detect squares, we use an auxiliary data structure, called \emph{catcher}. The catcher works with the string $text$. For correct work, the string $text[1..n{-}1]$ must be squarefree. The catcher contains integer variables $i$, $j$ such that $0 < i \le j$ and $j - i + 1 \le n - j$ and once a letter is appended to the right of $text$, the catcher detects square suffixes beginning inside $\overline{i,j}$, i.e., if for some $k\in \overline{i,j}$, $text[k..n]$ is a square, the catcher detects this square. The segment $\overline{i,j}$ is called the \emph{trap}.

Suppose $text[k..n] = xx$ for some nonempty string $x$ and $k\in \overline{i,j}$ (see~fig.~\ref{fig:square}). Since $j - i + 1 \le n - j$, we have $x = x'x''$ for some strings $x'$ and $x''$ such that $text[j{+}1..n] = x''x'x''$.
\begin{figure}[htb]
\includegraphics[scale=0.55]{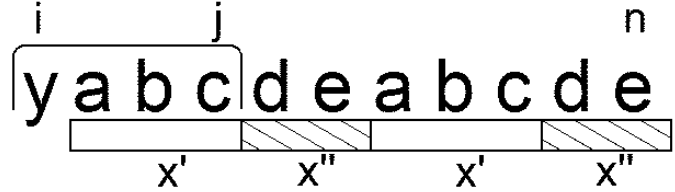}
\caption{\small The square suffix $xx$ with the leftmost position lying in $\overline{i,j}$; $j - i + 1 \le n - j$.}
\label{fig:square}
\end{figure}

\begin{lemma}
If $text[1..n{-}1]$ is squarefree, then $x''$ is the longest boundary of $text[j{+}1..n]$. \label{MaxBoundary}
\end{lemma}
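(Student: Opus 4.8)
The plan is to prove that no boundary of $w := text[j{+}1..n]$ is strictly longer than $x''$; since $x''$ is plainly a boundary of $w$ (it is a prefix of $w=x''x'x''$ and, because $x=x'x''$, also a suffix, and it is proper as $x'$ is nonempty), this makes it the longest one. It is convenient to write $w = x''x'x'' = x''x$, so that $|w| = |x| + |x''|$ and $w$ has period $|x|$ with boundary $x''$. Throughout I will use the elementary fact that a squarefree string of length $L$ cannot have a period $p$ with $2p \le L$, for then its prefix of length $2p$ is a square.

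Suppose, for contradiction, that $w$ has a boundary $b$ with $|b| > |x''|$. Then $w$ has period $\pi := |w| - |b|$, and $|b| > |x''|$ forces $\pi < |w| - |x''| = |x|$, while $\pi \ge 1$ since $b$ is proper. I would split into two cases by the size of $\pi$. \emph{Case A}, when $2\pi \le |w|-1$: here the period $\pi$ of $w$ directly yields a square, namely the prefix $text[j{+}1..j{+}2\pi]$, whose two length-$\pi$ halves agree because $text[t]=text[t{+}\pi]$ for all $t \in \overline{j{+}1,\,j{+}\pi}$ (valid since $j+\pi \le n-\pi$). This square ends at $j+2\pi \le n-1$, so it lies inside $text[1..n{-}1]$ — contradiction.

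\emph{Case B}, when $2\pi \ge |w|$: then $|b| = |w|-\pi \le |w|/2 \le |x|$, so $b$ fits inside a single copy of $x$. I would exploit the periodicity of the square $text[k..n]=xx$. The suffix occurrence $b = text[n{-}|b|{+}1..n]$ lies in the second copy of $x$ (as $|b| \le |x|$), so shifting it left by $|x|$ gives an occurrence $b = text[m{-}|b|..m{-}1]$ in the first copy, where $m = k+|x|$. A short computation gives $m - |b| = j+1-\ell$ with $\ell := |b|-|x''| \ge 1$, so $b$ occurs at positions $j+1-\ell$ and $j+1$, at distance $\ell < |b|$. Comparing the length-$\ell$ prefixes of these two aligned occurrences yields $text[j{+}1{-}\ell..j] = text[j{+}1..j{+}\ell]$, i.e. a square $text[j{+}1{-}\ell..j{+}\ell]$ of period $\ell$. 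Its left end $m-|b| \ge k \ge 1$ and its right end $j+\ell \le j+|b| \le n-1$ (as $b$ is proper), so this square again lies in $text[1..n{-}1]$ — contradiction.

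The two cases are exhaustive: were both $2\pi \ge |w|$ and $2\pi \le |w|-1$ to fail, we would need $|x| < |b| \le |w|/2 \le |x|$, which is impossible. I expect \emph{Case B} to be the crux: when the offending boundary is short, $\pi$ is so large that no $\pi$-square fits left of position $n$, and the only way to produce a forbidden square is to transport an occurrence of $b$ into the first copy of $x$ using the square's own period $|x|$ and to read off the overlap. The remaining work is routine bookkeeping — verifying the index ranges so each constructed square stays within $\overline{1,n{-}1}$, and checking the degenerate case $x''$ empty (where $w=x$ and the claim reduces to $x$ being unbordered), which the same shifting argument settles with $\ell = |b|$.
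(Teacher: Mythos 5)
Your proof is correct, and its crux — Case B — is precisely the paper's entire proof in different clothing: the paper takes the longest boundary $y$ with $|y| > |x''|$, asserts that $y$ is a suffix of $x$, and derives a forbidden square from the overlap of the occurrence of $y$ ending at position $n - |x|$ (which is exactly your shifted occurrence, starting at $m - |b| = j + 1 - \ell$) with the occurrence starting at $j + 1$; these are the same two occurrences at distance $\ell = |b| - |x''|$ that you compare. The genuine difference is your Case A, which has no counterpart in the paper. The paper's assertion that ``$y$ is a suffix of $x$'' silently presupposes $|y| \le |x|$, and no argument is given for why the longest boundary cannot exceed $|x|$; your Case A supplies exactly that missing argument (such a boundary forces a period $\pi < |x''|$, hence $2\pi \le |w| - 1$, hence the square $text[j{+}1..j{+}2\pi]$ inside $text[1..n{-}1]$). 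Likewise, you explicitly handle empty $x''$, where the two occurrences of Case B are adjacent rather than overlapping — a configuration in which the paper's word ``overlaps'' is literally false, though the square still appears. So the two arguments buy different things: the paper's is a three-line proof that leaves a couple of configurations to the reader, while yours costs an extra case but is airtight. One cosmetic remark: your justification of exhaustiveness is oddly phrased — the conditions $2\pi \le |w| - 1$ and $2\pi \ge |w|$ partition the integers trivially, so no computation involving $|b|$ and $|x|$ is needed there.
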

\begin{proof}
Suppose $y$ is the longest boundary of $text[j{+}1..n]$ and $y \ne x''$; then $y$ is a suffix of $x$ and the occurrence of $y$ that starts at position $n{-}|x|{-}|y|{+}1$ overlaps the occurrence of $y$ that starts at position $j{+}1$. Thus, $text[1..n{-}1]$ is not squarefree. This is a contradiction. \qed
\end{proof}

Denote $t = j - i + 1$. To obtain the longest boundary of $text[j{+}1..n]$, the catcher maintains an integer array $b[j{+}1..n]$ (for convenience, we use indices $\overline{j{+}1,n}$) such that for any $k\in\overline{j{+}1,n}$, $b[k]$ is equal to the length of the longest boundary of $text[j{+}1..k]$. Further, the catcher contains a variable $s$ such that if $2b[n] + t \ge n - j$, $s$ equals the length of the longest suffix of $text[i..j]$ that is a suffix of $text[1..n{-}b[n]]$ and otherwise, $s$ equals zero. Thus by Lemma~\ref{MaxBoundary}, the catcher detects a square iff $2b[n] + s \ge n - j$.

Let us describe how to compute $b$ and $s$. There is a well-known algorithm that efficiently calculates $b[n]$ (see \cite{Stringology}). Once $b[n]$ is found, we process $s$. If $b[n] = b[n{-}1]+1$, then $s$ remains unchanged (see the definition); otherwise we put $s = 0$. Next, if $2b[n] + t \ge n - j$ and $s = 0$, we compute $s$ by a naive algorithm. The following pseudo-code summarizes the description (for convenience, we define $b[j] = -1$).
\begin{algorithmic}[1]
\State read a letter and append it to $text$ (thereby incrementing $n$)
\State $b[n] \gets b[n{-}1] + 1$
\While{$b[n] > 0 \mathrel{\mathbf{and}} text[j{+}b[n]] \ne text[n]$}
    \State $b[n] \gets b[j{+}b[n]{-}1] + 1$ \Comment{classical algorithm that calculates $b[n]$}
    \State $s \gets 0$ \Comment{this line is executed iff $b[n] \ne b[n{-}1] + 1$}
\EndWhile
\If{$2b[n] + (j - i + 1) \ge n - j \mathrel{\mathbf{and}} s = 0$} \label{lst:naivescond}
    \While{$j - s \ge i \mathrel{\mathbf{and}} text[n{-}b[n]{-}s] = text[j{-}s]$}
        \State $s \gets s + 1$ \Comment{compute $s$ by a naive algorithm}
    \EndWhile
\EndIf
\If{$2b[n] + s \ge n - j$}
    \State the square suffix of the length $2(n - j - b[n])$ is detected
\EndIf
\end{algorithmic}

\begin{lemma}
The catcher requires $O(n - j)$ time and space. \label{CatcherTime}
\end{lemma}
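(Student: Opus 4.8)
The plan is to bound the three pieces of work separately: the $O(1)$ bookkeeping done once per appended letter, the border-computing \textbf{while} loop (lines 3--6), and the naive \textbf{while} loop that recomputes $s$ (lines 8--10). The space bound is immediate, since the only non-constant storage is the array $b[j{+}1..n]$, which has exactly $n-j$ entries.

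For the border loop I would use the classical amortized argument. After line~2 we always have $b[n] \le b[n{-}1]+1$, and every iteration of the loop on lines 3--6 strictly decreases the current value of $b[n]$, because $b[j{+}b[n]{-}1]$ is the longest boundary of a string of length $b[n]{-}1$ and is therefore at most $b[n]{-}2$. Since $b$ is bounded below by $0$ and is raised by at most one unit per appended letter, the total number of iterations of this loop over the whole run telescopes to $O(n-j)$.

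The crux is the naive loop that computes $s$. Here I would introduce the quantity $q = n - b[n]$, the length of the prefix $text[1..n{-}b[n]]$ against which $s$ is measured, and observe two facts: first, $q$ is nondecreasing and its total increase over the run equals the total amount by which the border loop decreases $b$, hence is $O(n-j)$; second, $q$ changes only when the border loop actually executes, which is exactly when line~5 resets $s$ to $0$. Since the test on line~\ref{lst:naivescond} fires the naive loop only while $s=0$, each execution of the naive loop can be associated with a strict increase of $q$, and it remains to charge the cost $O(s+1)$ of one execution to that increase. I would do this by comparing consecutive executions: if $u$ is the suffix of $text[i..j]$ matched at the current step (so $u$ occurs ending at the current value of $q$) and $u'$ is the one matched at the previous execution (occurring ending at the previous, smaller value of $q$), then $u$ and $u'$ are nested suffixes of $text[1..j]$; were the match longer than the gap between the two values of $q$, the two occurrences of the shorter of $u$ and $u'$ would overlap, and an overlap of a factor with itself forces a square inside $text[1..n{-}1]$, contradicting the standing assumption that this prefix is squarefree.

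The main obstacle, and the place where the argument needs care, is that the matched suffix can \emph{grow} from one execution to the next, so the simple charge ``cost $\le$ increase of $q$ since the previous execution'' is not immediately valid; I expect to handle this by charging each execution instead to the block of positions $(q-s, q]$ covered by its match, and arguing, again via the square-forcing observation, that these blocks overlap only by a controlled amount, so that their total length stays $O(n-j)$. A second delicate point is the boundary case in which the square forced by an overlap ends exactly at the current position $n$: there the forced square is itself the square suffix that the catcher reports, so the process terminates and no further work is charged. Once these two points are settled, the naive loop contributes $O(n-j)$, and adding the three pieces yields the claimed bound.
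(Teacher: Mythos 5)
Your space bound and your amortization of the border-computing loop are fine, and your potential $q=n-b[n]$ is exactly the paper's quantity $a_l=n_l-b[n_l]$. The gap is in the crux, precisely where you flag it, and your proposed repair does not close it. First, the square-forcing claim itself is false in the growing-match case: if the previous nonzero match has length $s_{l-1}$ and the current one length $s_l$ with $s_{l-1}\le a_l-a_{l-1}<s_l$, then the shorter suffix $u'$ has occurrences ending at $a_{l-1}$ and at $a_l$ that do \emph{not} overlap, while the longer suffix $u$ is only known to occur ending at $a_l$ and at $j$; no square is forced. Such events are not a removable corner case: the paper's proof is organized entirely around their existence --- its claim~\eqref{eqn:catcherstate} does not exclude executions with $s_l\ge a_l-a_{l-1}$, it only bounds how many earlier executions can exceed one and the same gap.

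Second, your fallback --- that the blocks $(q-s,q]$ ``overlap only by a controlled amount'' --- is precisely the content of the lemma, and the one tool you invoke (squarefreeness of $text[1..n{-}1]$ plus overlap-forces-square) is not enough to prove it. For two nonzero executions $p<q$, that tool yields only the pairwise separation $a_q-a_p\ge\min(s_p,s_q)$: the shorter of the two matched suffixes of $text[i..j]$ occurs ending at both positions, and two overlapping occurrences of one string give a square lying strictly inside the prefix read so far. This constraint system permits the match lengths over a fixed point to (at least) double from one execution to the next, i.e.\ $\Theta(\log t)$-fold coverage of a point, and globally it yields only $\sum_l s_l=O((n-j)\log t)$: executions with $s_l\in[2^i,2^{i+1})$ are pairwise $2^i$-separated inside a range of length $O(n-j)$, and summing over scales loses a logarithmic factor. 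To remove that factor one must use the half of the test in line~\ref{lst:naivescond} that you never invoke, namely that the naive loop fires only when $2b[n]+(j-i+1)\ge n-j$; the paper combines this inequality with squarefreeness (and a separate, easy treatment of the executions with $a_l\le j+t$) to prove claim~\eqref{eqn:catcherstate}, and then extracts the $O(n-j)$ bound by an iterative count. Without that ingredient the decisive step of the proof is missing. (A smaller inaccuracy: your ``delicate boundary case'' is misdiagnosed --- a square produced by genuinely overlapping occurrences can always be taken to end strictly before the right end of the rightmost occurrence, hence inside $text[1..n{-}1]$; conversely, a square ending at position $n$ need not start in the trap, so the catcher need not report it and you cannot argue that the process terminates there.)
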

\begin{proof}
The algorithm that fills $b$ takes $O(n - j)$ time (see~\cite{Stringology}).

Suppose there exists a positive integer $n'$ such that our algorithm computed a nonzero value of $s$ when $text$ had the length $n'$. Let $n_1 < \ldots < n_k$ be the set of all such integers. For each $l\in \overline{1,k}$, denote by $s_l$ the value of $s$ that was computed when $text$ had the length $n_l$. It suffices to prove that $\sum_{l=1}^k s_l = O(n - j)$.
Recall that the string $text[1..n{-}1]$ is squarefree. Therefore by Lemma~\ref{MaxBoundary}, the condition in line~\ref{lst:naivescond} implies $s_l < t$ for all $l\in \overline{1,k{-}1}$. Denote $a_l = n_l - b[n_l]$ for $l\in \overline{1,k}$. It is straightforward that $\{a_l\}_{l=1}^k$ is an increasing sequence. Let $k'$ be the maximal integer such that $k'\in \overline{1,k}$ and $a_{k'} \le j + t$. Let us first prove that $\sum_{l=1}^{k'} s_l = O(t)$.

It follows from the definition of catcher that for any $l\in \overline{1,k}$, $j + t \le n_l$. Therefore for each $l \in \overline{1,k'}$, $text[a_l{+}1..j{+}t]$ is a boundary of $text[j{+}1..j{+}t]$. Let us show that $s_l < a_l - a_{l-1}$ for all $l\in \overline{2,k'}$. Suppose $s_l \ge a_l - a_{l-1}$ for some $l\in \overline{2,k'}$ (see fig.~\ref{fig:catcher1}). Denote $r = a_l - a_{l-1}$. Then $text[a_{l-1}{+}1..a_l] = text[j{-}r{+}1..j]$ by definition of $s_l$. But $text[a_{l-1}{+}1..a_l]$ is a prefix of $text[j{+}1..j{+}t]$ because $text[a_{l-1}{+}1..j{+}t]$ is a boundary of $text[j{+}1..j{+}t]$. So, we obtain a square $text[j{-}r{+}1..j{+}r]$ and this is a contradiction. Thus, $\sum_{l=1}^{k'} s_l < t + \sum_{l=2}^{k'}(a_l - a_{l-1}) = O(t)$.
\begin{figure}[htb]
\includegraphics[scale=0.55]{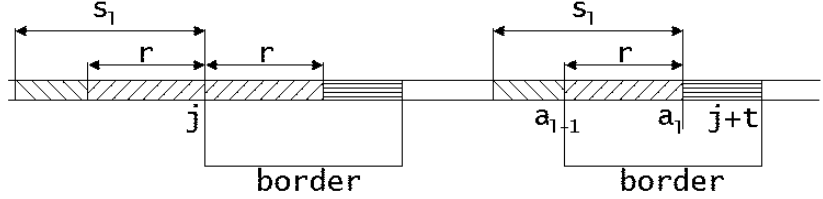}
\caption{\small If for some $l\in\overline{2,k'}$, $s_l \ge a_l - a_{l-1}$, then we have a square of the length $2r$.}
\label{fig:catcher1}
\end{figure}

To estimate the sum $\sum_{l=k'+1}^k s_l$, we first prove the following statement.
\begin{equation}
\begin{array}{l}
\text{If for some $l \in \overline{k'{+}2,k}$, $s_l \ge a_l - a_{l-1}$, then for at most one} \\
\text{$l' \in \overline{k'{+}2,l{-}1}$, the same inequality $s_{l'} \ge a_l - a_{l-1}$ holds.} \label{eqn:catcherstate}
\end{array}
\end{equation}
Suppose, to the contrary, for some $l\in \overline{k'{+}2,k}$, $s_l \ge a_l - a_{l-1}$ and there are $l_1, l_2 \in \overline{k'{+}2,l{-}1}$ such that $l_1 < l_2$ and $s_{l_1}, s_{l_2} \ge a_l - a_{l-1}$ (see fig.~\ref{fig:catcher2}). Denote $r = a_l - a_{l-1}$. By definition of $s_l$, we have $text[a_{l-1}{+}1..a_l] = text[j{-}r{+}1..j]$. Now it is easy to see that $n_{l-1} < a_l$. Indeed, if $n_{l-1} \ge a_l$, then $text[a_{l-1}{+}1..a_l] = text[j{+}1..j{+}r]$ because $text[a_{l-1}{+}1..n_l]$ is a boundary of $text[j{+}1..n_l]$; but this implies that $text[j{-}r{+}1..j{+}r]$ is a square. Thus $b[n_{l-1}] < r$. Since, by definition of $s_{l_1}$ and $s_{l_2}$, $text[a_{l_1}{-}r{+}1..a_{l_1}] = text[a_{l_2}{-}r{+}1..a_{l_2}]$ and $text[1..n{-}1]$ is squarefree, we have $a_{l_2} - a_{l_1} > r$. Recall that $a_{l_1}, a_{l_2} > j + t$ by definition of $k'$. Finally, we obtain $n_{l-1} - j \ge t + (a_{l_2} - a_{l_1}) + b[n_{l-1}] > t + r + b[n_{l-1}] > t + 2b[n_{l-1}]$. But the condition in line~\ref{lst:naivescond} suggests that $n_{l-1} - j \le t + 2b[n_{l-1}]$. This is a contradiction.
\begin{figure}[htb]
\includegraphics[scale=0.55]{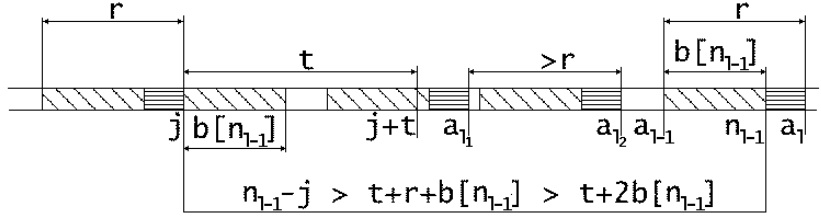}
\caption{\small Since $a_{l_2} - a_{l_1} > r$, $r > b[n_{l-1}]$, and $a_{l_1} > j + t$, we have $n_{l-1} - j > t + 2b[n_{l-1}]$.}
\label{fig:catcher2}
\end{figure}

Now we can estimate the sum $\sum_{l=k'+1}^k s_l$. If for each $l\in \overline{k'{+}2,k}$, $s_l < a_l - a_{l-1}$, then $\sum_{l=k'+1}^k s_l < t + \sum_{l=k'+2}^k (a_l - a_{l-1}) = O(n - j)$. Let $l_1 < l_2 < \ldots < l_m$ be the set of all $l'\in \overline{k'{+}2,k}$ such that $s_{l'} \ge a_{l'} - a_{l'-1}$. Denote $z = \sum_{p=1}^m s_{l_p}$. It follows from~\eqref{eqn:catcherstate} that for at most one $p \in \overline{1,m{-}1}$, $s_{l_p} \ge a_{l_m} - a_{l_m-1}$. So, $z < (m-1)(a_{l_m} - a_{l_m-1}) + 2t$. In the same way we obtain that for at most one $p \in \overline{1,m{-}2}$, $s_{l_p} \ge a_{l_{m-1}} - a_{l_{m-1}-1}$. So, $z < (m - 2)(a_{l_{m-1}} - a_{l_{m-1}-1}) + 2(a_{l_m} - a_{l_m-1}) + 2t$. Further, for at most one $p \in \overline{1,m{-}3}$, $s_{l_p} \ge a_{l_{m-2}} - a_{l_{m-2}-1}$. So, $z < (m - 3)(a_{l_{m-2}} - a_{l_{m-2}-1}) + 2(a_{l_{m-1}} - a_{l_{m-1}-1}) + 2(a_{l_m} - a_{l_m-1}) + 2t$. This process leads to the inequality $z < 2\sum_{p=1}^m (a_{l_p} - a_{l_p-1}) + 2t = O(n - j)$. Finally, we have $\sum_{l=1}^k s_l = \sum_{l=1}^{k'} s_l + \sum_{l=k'+1}^k s_l = O(t) + O(n - j) + z = O(n - j)$. \qed
\end{proof}

\section{Unordered Alphabet}\label{SectUnord}

\begin{theorem}
For unordered alphabet, there exists an online algorithm that detects squares in $O(n\log n)$ time and linear space. \label{UnorderedSquares}
\end{theorem}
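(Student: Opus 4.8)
The plan is to cover every potential square by a family of catchers organized into $O(\log n)$ levels that are run simultaneously, where level $\ell$ is made responsible for the squares of period $p \in \overline{2^\ell, 2^{\ell+1}-1}$. First I would fix a level $\ell$, set $w = 2^\ell$, partition the positions of $text$ into consecutive blocks of length $w$, and for each block $\overline{i,j}$ (so $j - i + 1 = w$) run a catcher with trap $\overline{i,j}$. This catcher is created at the moment $n$ first reaches $j + w$ (at which point its array $b[j{+}1..j{+}w]$ is built from scratch in $O(w)$ time) and is destroyed as soon as $n - j \ge 4w$.

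The crux is a timing argument showing that this single catcher reports every square of period $p \in \overline{2^\ell, 2^{\ell+1}-1}$ whose leftmost position $k$ lies in its trap. Writing $n = k + 2p - 1$ for the completion time of such a square, I would expand the catcher's legality condition $j - i + 1 \le n - j$ and check that it reduces to $w \le p$, which holds for every $k\in\overline{i,j}$ by the choice of level; hence at step $n$ the catcher is in a legal state and, being correct by Lemma~\ref{MaxBoundary}, it detects the square. Moreover $n - j = (k - j) + 2p - 1$ ranges over $\overline{2p - w, 2p - 1} \subseteq \overline{w, 4w - 1}$, so $n$ always falls inside the catcher's active window $[j + w, j + 4w)$, which is exactly what reconciles the early creation and the timely destruction with correct detection. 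I expect this interplay among the trap width $w$, the period $p$, and the window $[j+w, j+4w)$ to be the main obstacle, since the legality condition must be verified simultaneously for the left and the right ends of the trap.

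It then remains to add up the costs. By Lemma~\ref{CatcherTime} a level-$\ell$ catcher, run only while $n - j < 4w$, spends $O(w)$ total time and uses $O(w)$ space; since the blocks tile $text$, only $O(n/w)$ such catchers are ever created at level $\ell$, contributing $O(n)$ time there, and summing over the $O(\log n)$ levels yields the $O(n\log n)$ time bound. For space, at each level the active window has length $3w$ while consecutive catchers are started $w$ positions apart, so only $O(1)$ catchers are alive per level at any instant, each occupying $O(w) = O(2^\ell)$ cells; hence the total working space is $\sum_{\ell} O(2^\ell) = O(n)$, in addition to the stored text. Finally, the catcher performs only equality comparisons of letters, so the whole construction runs on an unordered alphabet, and since the algorithm halts at the first detected square, the precondition that $text[1..n{-}1]$ be squarefree is maintained throughout. \qed
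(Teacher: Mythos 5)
Your proposal is correct, and it reuses the paper's core machinery — catchers with dyadic trap lengths, created when $n$ first reaches $j+w$ and charged via Lemma~\ref{CatcherTime} — but the covering argument is genuinely different from the paper's. The paper maintains, at every moment, a system of live traps (one or two per length $2^k$) that \emph{partitions the whole read prefix} $text[1..n{-}1]$: when $n$ becomes a multiple of $2^k$ a new trap of length $2^k$ is created and, if appropriate, two older ones are destroyed, their place being taken on the next loop iteration by a single trap of length $2^{k+1}$. Correctness is then immediate and period-free: any square suffix starts somewhere in $text[1..n{-}1]$, hence inside some live trap; the price is the merge-style scheduling and the need to verify the lifetime bound $n-j<3\cdot 2^k$ for the complexity analysis. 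You instead stratify by period: level $\ell$ owns exactly the squares with period $p\in\overline{2^\ell,2^{\ell+1}{-}1}$, its blocks tile the string statically, and each block's catcher lives in the fixed window $[j{+}w,\,j{+}4w)$. Here the scheduling is trivial and the levels never interact, but correctness is no longer free — it is carried entirely by your timing computation, namely that at the completion step $n=k+2p-1$ one has $n-j\in\overline{2p{-}w,\,2p{-}1}\subseteq\overline{w,\,4w{-}3}$, so the responsible catcher already exists, is not yet destroyed, and satisfies the legality constraint $j-i+1\le n-j$ precisely because $p\ge w$. This timing lemma plays the role that the coverage invariant plays in the paper. Both routes yield the same $O(n\log n)$ time and $O(n)$ space (you keep at most three live catchers per level versus the paper's two, which is immaterial), and both correctly reduce soundness to the squarefreeness precondition of Lemma~\ref{MaxBoundary} by halting at the first report; your batch initialization of $b[j{+}1..j{+}w]$ at creation time is also implicitly present in the paper's scheme, since its traps are likewise created only when $n-j$ equals the trap length.
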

\begin{proof}
Our algorithm maintains $O(\log n)$ catchers and traps of these catchers cover the string $text[1..n{-}1]$. Let $k \in \overline{0,\lfloor\log n\rfloor{-}1}$ and $p$ be the maximal integer such that $(p{+}1) 2^k \le n$. We have one or two traps of the length $2^k$: the first trap is equal to $\overline{(p{-}1)2^k{+}1,p2^k}$ and if $p$ is even, we have another trap that is equal to $\overline{(p{-}2)2^k{+}1,(p{-}1)2^k}$ (see fig.~\ref{fig:systraps}). If $n$ has became a multiple of $2^k$ after extension of $text$, we add a new trap of the length $2^k$ and destroy two previous traps of the length $2^k$ if the new $p$ is odd and these traps exist. In the following pseudo-code we use the three-operand $\mathbf{for}$ loop like in the C language.
\begin{algorithmic}[1]
\State read a letter and append it to $text$ (thereby incrementing $n$)
\For{$(k \gets 0;\; 2^k \le n/2 \mathrel{\mathbf{and}} (n \bmod 2^k) = 0;\;k \gets k + 1)$}
    \State $p \gets \lfloor n / 2^k\rfloor - 1$
    \State create a catcher with the trap $\overline{(p{-}1)2^k{+}1,p2^k}$
    \If{$(p \bmod 2) \ne 0 \mathrel{\mathbf{and}} p > 1$}
        \State remove two previous catchers with the traps of the length $2^k$
    \EndIf
\EndFor
\end{algorithmic}
\begin{figure}[htb]
\includegraphics[scale=0.55]{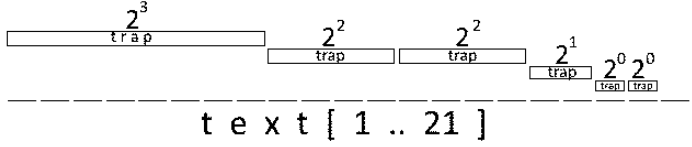}
\caption{\small The system of traps covers the string $text[1..n{-}1]$, where $n = 21$.}
\label{fig:systraps}
\end{figure}

To prove that the described system of traps covers the string $text[1..n{-}1]$, it suffices to note that if some iteration of the loop removes two catchers with the traps of the length $2^k$, then the next iteration creates a catcher with the trap of the length $2^{k+1}$ on their place.

It is easy to see that if for some $k \in \overline{0,\lfloor\log n\rfloor{-}1}$, the proposed algorithm maintains a trap $\overline{i,j}$ of length $2^k$, then $n - j < 3\cdot 2^k$. Hence it follows from Lemma~\ref{CatcherTime} that all catchers of the algorithm use $O(2\sum_{k=0}^{\lfloor\log n\rfloor{-}1} 3\cdot 2^k) = O(n)$ space. Since traps of the same length do not intersect, to maintain traps of the length $2^k$, the algorithm takes, by Lemma~\ref{CatcherTime}, $O(3\cdot 2^k \cdot \frac{n}{2^k}) = O(n)$ time. Thus, the algorithm requires $O(n\log n)$ overall time. \qed
\end{proof}

\section{Ordered Alphabet}\label{SectOrd}

In the case of ordered alphabet the following lemma narrows the area of square suffix search.

\begin{lemma}
Let $text$ be a string of length $n$. Denote by $t$ the length of the longest suffix of $text$ that occurs at least twice in $text$. If $text[1..n{-}1]$ is squarefree and for some positive $k$, $text[n{-}k{+}1..n]$ is a square, then $t < k \le 2t$. \label{SquareLocation}
\end{lemma}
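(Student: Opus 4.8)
The plan is to prove the two inequalities $t < k$ and $k \le 2t$ separately, using the squarefreeness of $text[1..n{-}1]$ to rule out degenerate configurations. Write $text[n{-}k{+}1..n] = xx$ with $|x| = k/2$, so the square has total length $k$ and its halves each have length $k/2$. The key observation is that $x$ is a suffix of $text$ that occurs at least twice: it occurs at position $n{-}k{+}1$ and again at position $n{-}k/2{+}1$. By definition $t$ is the length of the \emph{longest} such repeated suffix, so immediately $k/2 \le t$, which rearranges to $k \le 2t$. This half should be routine.

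For the lower bound $t < k$, I would argue by contradiction: suppose $t \ge k$. Let $u = text[n{-}t{+}1..n]$ be the longest suffix of $text$ occurring at least twice; by assumption $|u| = t \ge k$, so the square $xx$ is a suffix of $u$. Since $u$ occurs at some earlier position as well, say starting at position $p \le n - t$, the occurrence of $u$ at position $p$ produces a second, earlier occurrence of the square $xx$ ending at position $p + t - 1 \le n - 1$. That earlier copy of $xx$ is a square contained entirely in $text[1..n{-}1]$, contradicting squarefreeness. The delicate point here is making sure the earlier occurrence really lies inside $text[1..n{-}1]$ and is a genuine square substring, i.e.\ that copying $u$ to position $p$ carries the square along intact; this follows because a square that is a suffix of $u$ reappears verbatim wherever $u$ reappears, and the rightmost index of that copy is at most $n-1$.

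The main obstacle I anticipate is handling the interaction between overlapping occurrences of $u$ when $p$ is close to $n - t$, where the two occurrences of $u$ overlap and one must be careful that the ``earlier square'' is not accidentally the same as $xx$ or does not extend past position $n{-}1$. I would address this by choosing $p$ to be the leftmost starting position of a second occurrence of $u$ (guaranteed to exist and to satisfy $p \le n - t$), and then verifying that the induced copy of the square ends at $p + t - 1$, which is strictly less than $n$ whenever $p < n - t + 1$, i.e.\ whenever the second occurrence is genuinely distinct from the suffix occurrence. Since $u$ occurs \emph{at least twice}, such a distinct $p$ exists, closing the argument.

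Taken together these two bounds give $t < k \le 2t$, as required. The whole proof should be short; essentially all the content is in recognizing that $x$ witnesses $k/2 \le t$ for the upper bound, and that any repeated suffix of length $\ge k$ would transport the freshly-created square into the squarefree prefix $text[1..n{-}1]$ for the lower bound.
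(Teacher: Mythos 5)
Your proposal is correct and follows essentially the same argument as the paper: the bound $k \le 2t$ comes from observing that the half $x$ of the square is a suffix occurring at least twice, and the bound $t < k$ comes from noting that a repeated suffix of length at least $k$ would transport the square into the squarefree prefix $text[1..n{-}1]$. The paper states both steps more tersely (as two one-line contradictions), while you spell out the detail that the earlier copy of the square ends at position at most $n-1$; the content is identical.
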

\begin{proof}
Suppose $k \le t$. Since $text[n{-}t{+}1..n]$ has two occurrences in $text$, the square $text[n{-}k{+}1..n]$ occurs twice and the string $text[1..n{-}1]$ is not squarefree. This is a contradiction.

Suppose $k > 2t$. Note that $k$ is even. Then the suffix $text[n{-}k/2{+}1..n]$ has at least two occurrences in $text$. This contradicts to the definition of $t$. \qed
\end{proof}

For each integer $i \ge 0$, denote by $t_i$ the length of the longest suffix of $text[1..i]$ that has at least two occurrences in $text[1..i]$. We say that there is an online access to the sequence $\{t_i\}$ if any algorithm that reads the string $text$ sequentially from left to right can read $t_i$ immediately after reading $text[i]$.

The following lemma describes an online algorithm for square detection based on an online access to $\{t_i\}$. Note that the alphabet is not necessarily ordered.

\begin{lemma}
If there is an online access to the sequence $\{t_i\}$, then there exists an algorithm that online detects squares in linear time and space. \label{OrderedLemma}
\end{lemma}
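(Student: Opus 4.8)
The plan is to combine the online access to $\{t_i\}$ with Lemma~\ref{SquareLocation} to shrink the search to a short window, and then to cover that window by a constant number of the catchers built in Section~\ref{SectCatcher}. Suppose the algorithm has read $text[1..n]$ and $text[1..n{-}1]$ is squarefree. By Lemma~\ref{SquareLocation} every square suffix of $text[1..n]$ has length in $(t_n,2t_n]$, hence begins at a position in the window $W_n = \overline{n-2t_n+1,\,n-t_n}$, which has length exactly $t_n$. So it suffices to detect, at each step, the square suffixes that begin in $W_n$; by definition this is precisely what a family of catchers whose traps cover $W_n$ accomplishes, and then a square is reported as soon as one of them fires.

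The first thing I would observe is that the catcher constraint $j-i+1\le n-j$ forces any trap reaching the right end $n-t_n$ of $W_n$ to have length at most $t_n$: a trap that exactly fits $W_n$ is legitimate only at the instant of its creation, so to \emph{reuse} traps over many steps I place them on fixed dyadic grids. Concretely, for the scale $m=\lfloor\log_2 t_n\rfloor-1$ I use the catchers whose traps are the grid cells $\overline{g\cdot 2^m+1,\,(g+1)2^m}$ of length $2^m=\Theta(t_n)$. A short computation with distances $d=n-p$ to positions $p\in W_n$ (the window occupies $d\in[t_n,2t_n)$, a cell with right end $j$ covers $d\in[\,n-j,\,n-j+2^m)$, and the nearest relevant cell has $n-j>2^m$) shows that only $O(1)$ grid cells meet $W_n$, that every one of them satisfies $j+2^m\le n$ and hence $j-i+1\le n-j$, and that their union covers all of $W_n$; near a power-of-two boundary of $t_n$ one uses $O(1)$ cells from each of two adjacent scales. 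Thus the window is always covered by $O(1)$ legitimate catchers of length $\Theta(t_n)$.

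The algorithm maintains these catchers incrementally. On reading $text[n]$ it reads $t_n$, determines the $O(1)$ grid cells needed to cover $W_n$, creates any of them not yet present (this is legitimate because each such cell has $n-j\ge 2^m$ at the moment it is first needed), feeds $text[n]$ to every live catcher, and reports a square if any catcher does. A catcher with trap of length $\ell$ is destroyed once $n-j$ exceeds a fixed constant multiple of $\ell$; by the distance bookkeeping above such a catcher can no longer meet any future window, so discarding it is safe. Correctness is then immediate: the traps cover $W_n$ at every step, so by the catcher guarantee together with Lemmas~\ref{MaxBoundary} and~\ref{SquareLocation} the first square suffix is caught exactly when it is completed, and up to that moment $text[1..n{-}1]$ is squarefree, which is the precondition the catchers require.

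For the complexity I expect the crux to lie in the amortized counting, which must survive the erratic behaviour of $t_n$ (by the argument preceding Lemma~\ref{SquareLocation}, $t_n$ grows by at most one per step but may drop arbitrarily, so a scale can be re-entered many times). The key point to prove is that a fixed grid cell $(m,g)$ is \emph{needed only during a single interval of steps of length $\Theta(2^m)$}, determined solely by $m$ and $g$ via the live-range $n-j\in[2^m,c\,2^m]$; tying destruction to the trap length (not to the instantaneous $t_n$) then guarantees that each cell is created at most once, no matter how $t_n$ oscillates, and that a downward jump of $t_n$ only ever asks for cells that have not been used before. By Lemma~\ref{CatcherTime} each length-$2^m$ catcher costs $O(2^m)$ in time and space, and the number of scale-$m$ cells ever created is $O(T_m/2^m+1)$, where $T_m$ is the number of steps whose scale is $m$; since $\sum_m T_m=n$ and $\sum_{m\le\log n}2^m=O(n)$, the total cost telescopes to $O(n)$ time, while at any instant only $O(1)$ catchers are alive, giving $O(n)$ space. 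The delicate verification is exactly the single-creation-per-cell property and the geometric summation that follows from it. \qed
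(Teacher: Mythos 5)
Your reduction is exactly the paper's: by Lemma~\ref{SquareLocation} every square suffix of $text[1..n]$ begins in the window $W_n=\overline{n-2t_n+1,\,n-t_n}$, so it suffices to keep $W_n$ covered by catcher traps, and your verification that every grid cell meeting $W_n$ is a legal trap (each such cell has right end $j$ with $n-j\ge t_n-2^m+1>2^m$, since $t_n\ge 2^{m+1}$) is sound, as is the treatment of late creation via the initialization cost in Lemma~\ref{CatcherTime}. The gap is in the time analysis, precisely at the point you yourself flag as delicate. Your single-creation-per-cell property does hold under your destruction rule, but it does not yield the counting claim that the number of scale-$m$ cells ever created is $O(T_m/2^m+1)$, and that claim is false. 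Because $t_n$ can drop arbitrarily in one step, the trajectory can enter a scale for a \emph{single} step whose window consists of cells never covered before, and can do so repeatedly: consider the cycle ``one step with $t_n=2^{m+2}$ (scale $m{+}1$), then drop $t$ to $2^{m+1}$ and let it climb back at rate one over the next $2^{m+1}$ steps (all at scale $m$)''. Scale $m{+}1$ gets one step per cycle, but $n$ advances by $2^{m+1}+1$ per cycle, so each such step requires at least one brand-new scale-$(m{+}1)$ cell; hence $\Theta(T_{m+1})$ cells are created at that scale, not $O(T_{m+1}/2^{m+1}+1)$, each costing $\Theta(2^{m+1})$ by Lemma~\ref{CatcherTime}. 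In this example the total work is still linear, but only because the cost of each isolated scale-$(m{+}1)$ step is paid for by the $2^{m+1}$ steps spent at scale $m$ --- a cross-scale charge that your per-scale telescoping ``$\sum_m T_m=n$'' is structurally unable to produce. (Your side claim that only $O(1)$ catchers are alive at any instant fails for the same reason after large drops of $t_n$; linear space would still follow from the time bound, if the time bound were proven.)

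This is also where your route genuinely diverges from the paper's, and where the paper's choices earn their keep. Instead of a fixed dyadic grid, the paper keeps a slack variable $s$ with $\frac{3}{4}s\le t_n\le s$ and only three adaptively placed traps satisfying~\eqref{eqn:trapscond}. When $t_n$ drops below $\frac{3}{4}s$, all three catchers are rebuilt at cost $O(s_{n_i})$, and this cost is charged to the \emph{decrease} $s_{n_i-1}-s_{n_i}$; since the total decrease of $s$ over the run is at most its total increase, which is at most $n$, these rebuilds cost $O(n)$ overall. Otherwise catchers are rebuilt only after the string has grown by a constant fraction of their length, so those costs telescope against elapsed steps. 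That is exactly the global conservation argument (total drop of $t_n$ bounded by total growth, hence by $n$) that your fixed-grid scheme needs but your accounting does not supply. As written, the $O(n)$ bound in your proposal is unproven; to salvage the grid scheme you would have to prove an analogous cross-scale amortization rather than the stated per-scale bound.
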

\begin{proof}
Our algorithm online reads the string $text$ (as above, $n$ denotes the number of letters read) and maintains an integer variable $s$ such that $\frac{3}{4}s \le t_n \le s$ (initially $s = 0$). To detect a square, we use three catchers; the traps of these catchers are denoted by $\overline{i_1,j_1}$, $\overline{i_2,j_2}$, and $\overline{i_3,j_3}$. The traps satisfy the following conditions (any of these traps can be empty; see fig.~\ref{fig:threetraps}):
\begin{equation}
i_1 \le n - 2t_n + 1, j_1 + 1 = i_2, j_2 + 1 = i_3, j_3 \ge n - \frac{3}{4}s\enspace. \label{eqn:trapscond}
\end{equation}
\begin{figure}[htb]
\includegraphics[scale=0.55]{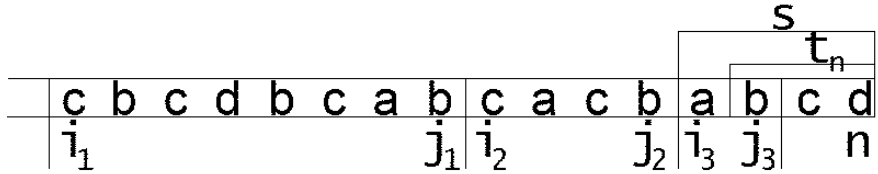}
\caption{\small The traps $\overline{i_1,j_1}$, $\overline{i_2,j_2}$, and $\overline{i_3,j_3}$ just after update; $\frac{3}4s \le t_n \le s$.}
\label{fig:threetraps}
\end{figure}
Thus, the traps cover the block $\overline{n{-}2t_n{+}1,n{-}t_n}$ and therefore, by Lemma~\ref{SquareLocation}, the algorithm detects squares. Consider the following pseudo-code that maintains the catchers and the variable $s$ (``update'' command replaces the corresponding catcher with a new one):
\begin{algorithmic}[1]
\State read a letter and append it to $text$ (thereby we increment $n$ and read $t_n$)
\State $s \gets s + 1$
\If{$\frac{3}{4}s > t_n$}
    \State $s \gets t_n$    \label{lst:updateall}
    \State $i_1 \gets +\infty$, $j_3 \gets -\infty$ \Comment{these assignments cause updates of all catchers}
\EndIf
\If{$i_1 > \max\{n - 2t_n + 1, 1\}$} \label{lst:recalc12}
    \State update the 1st catcher $i_1 \gets \max\{n - 4s + 1, 1\}$, $j_1 \gets \max\{n - 2s, 1\}$
    \State update the 2nd catcher $i_2 \gets j_1 + 1$, $j_2 \gets n - s$
\EndIf
\If{$j_3 < n - \frac{3}{4}s$} \label{lst:recalc3}
    \State update the 3rd catcher $i_3 \gets n - s + 1$, $j_3 \gets n - \lceil s/2\rceil$ \label{lst:catcher3}
\EndIf
\end{algorithmic}

Clearly, the proposed algorithm preserves~\eqref{eqn:trapscond} and thus by Lemma~\ref{SquareLocation}, works correctly. Further, it follows from Lemma~\ref{CatcherTime} that the algorithm uses linear space. To end the proof, it suffices to estimate the working time.

Suppose the algorithm processed a string of length $n$. Let for any $n'\in \overline{1,n}$, the term \emph{$n'$th step} refers to the set of instructions performed by the algorithm when it read and processed the letter $text[n']$. For any $n'\in \overline{0,n}$, denote by $s_{n'}$ the value of $s$ that was calculated on $n'$th step. Let $n_1 < n_2 < \ldots < n_k$ be the set of all steps on that the algorithm performed the line~\ref{lst:updateall}. At first we estimate the time required by steps $n_1, n_2, \ldots, n_k$; then we estimate the time required by other steps: the updates of the first and second catchers and finally the updates of the third catcher.

Consider the steps $n_1, n_2, \ldots, n_k$. It follows from Lemma~\ref{CatcherTime} that for any $i\in \overline{1,k}$, the updates of the first, second, and third catchers on $n_i$th step require $O(2s_{n_i})$, $O(s_{n_i})$, and $O(s_{n_i}/2)$ time respectively. Since $s_{n_i} < \frac{3}{4}(s_{n_i-1} + 1)$, we have $\frac{1}{3}s_{n_i} < s_{n_i-1} - s_{n_i} + 1$. Therefore, the steps $n_1, n_2, \ldots, n_k$ require $O(s_{n_1} + s_{n_2} + \ldots + s_{n_k}) = O((s_{n_1}{-}s_{n_1-1}) + (s_{n_2}{-}s_{n_2-1}) + \ldots + (s_{n_k}{-}s_{n_k-1}) + k) = O(n)$ time.

Let $i\in \overline{1,k{-}1}$. Consider the steps $n_i{+}1, n_i{+}2, \ldots, n_{i+1}{-}1$. Let $p_1 < p_2 < \ldots < p_l$ be the set of all steps $p \in \overline{n_i{+}1,n_{i+1}}$ on that the algorithm updated the first and second catchers. Clearly $p_l = n_{i+1}$. For any $j\in \overline{1,l}$, the recalculation of the first and second catchers on $p_j$th step takes $O(2s_{p_j} + s_{p_j}) = O(s_{p_j})$ time. The condition in line~\ref{lst:recalc12} implies that for any $j\in \overline{2,l}$, $p_j - p_{j-1} \ge 2s_{p_{j-1}}$. Hence the recalculation is performed in $O(s_{p_1} + s_{p_2} + \ldots + s_{p_{l-1}}) = O((p_2{-}p_1) + (p_3{-}p_2) + \ldots + (p_l{-}p_{l-1})) = O(n_{i+1} - n_i)$ time.

Let $q_1 < q_2 < \ldots < q_m$ be the set of all steps $q \in \overline{n_i{+}1, n_{i+1}}$ on that the algorithm updated the third catcher. Clearly $q_m = n_{i+1}$. For any $j\in \overline{1,m}$, this recalculation takes $O(s_{q_j})$ time on $q_j$th step. It follows from the lines~\ref{lst:recalc3}--\ref{lst:catcher3} that for any $j\in \overline{2,m}$, $q_j - q_{j-1} \ge s_{q_{j-1}}$. Then the recalculation takes $O(s_{q_1} + s_{q_2} + \ldots + s_{q_{m-1}}) = O((q_2{-}q_1) + (q_3{-}q_2) + \ldots + (q_m{-}q_{m-1})) = O(n_{i+1} - n_i)$ time.

Thus, all recalculations of the catchers are performed in $O((n_2{-}n_1) + (n_3{-}n_2) + \ldots + (n_{k}{-}n_{k-1})) = O(n)$ time and this result ends the proof. \qed
\end{proof}

\begin{theorem}
For ordered alphabet, there exists an algorithm that online detects squares in $O(n\log\sigma)$ time and linear space, where $\sigma$ is the number of different letters in the input string.
\end{theorem}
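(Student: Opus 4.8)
The plan is to invoke Lemma~\ref{OrderedLemma}: it reduces the theorem to exhibiting, on an ordered alphabet, a data structure that provides online access to the sequence $\{t_i\}$ within $O(n\log\sigma)$ time and linear space. Recall that $t_i$ is the length of the longest suffix of $text[1..i]$ that has at least two occurrences in $text[1..i]$; equivalently, it is the length of the longest suffix of $text[1..i]$ whose set of ending positions is not the singleton $\{i\}$.

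First I would build the suffix automaton (DAWG) of $text$ online, using the standard incremental construction (see~\cite{Stringology}), maintaining for every state its length $len$ and its suffix link, and storing the outgoing transitions of each state in a balanced search tree keyed by the alphabet order. After the letter $text[i]$ is appended, let $last$ be the state whose longest member is the whole prefix $text[1..i]$. I claim that
\[
t_i = len(link(last)).
\]
Then, after each extension, I simply read off $len(link(last))$ and pass it to the algorithm of Lemma~\ref{OrderedLemma}.

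The heart of the argument is establishing this identity via the endpos equivalence classes. The state $last$ has endpos exactly $\{i\}$, so $link(last)$ is, by definition of the suffix link, the state whose longest member $v$ is the longest suffix of $text[1..i]$ with endpos strictly larger than $\{i\}$; such a $v$ ends at some position $i' < i$ and therefore occurs at least twice. A short observation shows that this $v$ is in fact the longest member of its state, whence $len(link(last)) = |v|$: any longer string sharing the endpos of $v$ would also end at $i$ and thus would be a longer suffix of $text[1..i]$ with non-singleton endpos, contradicting the maximality of $v$. Since $|v|$ is by definition the length of the longest suffix occurring at least twice, $|v| = t_i$, which proves the claim. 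I expect the care needed around cloned states in the construction to be the main delicate point of this step.

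Finally I would bound the complexity. The online DAWG construction creates $O(n)$ states, stores $O(n)$ transitions in total, and performs $O(n)$ transition operations over the whole run, so the structure occupies linear space independently of $\sigma$. On an ordered alphabet each transition lookup, insertion, or deletion costs $O(\log\sigma)$ in the balanced search trees, hence producing the entire sequence $\{t_i\}$ online takes $O(n\log\sigma)$ time and linear space. Composing this with the linear-time, linear-space detector of Lemma~\ref{OrderedLemma} yields an online square-detection algorithm running in $O(n\log\sigma)$ time and linear space, as required. The complexity bookkeeping is routine given the standard amortized analysis of the construction; the genuine obstacle is the endpos/suffix-link identity $t_i = len(link(last))$.
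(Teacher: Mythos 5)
Your proposal is correct and takes essentially the same route as the paper: reduce to Lemma~\ref{OrderedLemma} and obtain online access to $\{t_i\}$ from a standard linear-space online suffix structure with $O(\log\sigma)$-time transitions. The only difference is the choice of structure --- the paper cites Ukkonen's online suffix tree (where $t_i$ can be read off the active point), while you build the suffix automaton and prove the identity $t_i = len(link(last))$, which your endpos argument establishes correctly --- so the two instantiations are interchangeable.
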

\begin{proof}
To compute the sequence $\{t_i\}$, we can use the standard Ukkonen's online algorithm \cite{Ukkonen}, which works in $O(n\log\sigma)$ time and linear space, or some more space efficient and practical algorithm (see \cite{OkanoharaSadakane} for example). Thus, the theorem follows from Lemma~\ref{OrderedLemma}. \qed
\end{proof}

\begin{corollary*}
For constant alphabet, there exists an algorithm that online detects squares in linear time and space.
\end{corollary*}

\section{Conclusion}\label{SectConclusion}

Some important problems still remain open. To date, there is no nontrivial lower bound for the problem of square detection in the case of ordered alphabet. It follows from \cite{KolpakovKucherov} that such lower bound immediately implies the same lower bound for the problem of Lempel-Ziv factorization (the later is a widely used tool in stringology and data compression).

It is also interesting to construct an efficient online algorithm for square detection that allows a ``rollback'' operation, i.e., the operation that cuts off a suffix of arbitrary length from the read string. One such algorithm is presented in~\cite{Shur}.

\bibliographystyle{splncs}

\end{document}